\documentclass[11pt, letterpaper]{elsarticle}
\usepackage{amsmath}
\usepackage{amssymb}
\usepackage{amsfonts}
\usepackage{graphicx}
\usepackage{multirow}
\usepackage{latexsym}
\usepackage{epic}
\usepackage{url}
\usepackage{soul}
\usepackage{afterpage}
\usepackage[unicode, pdftex]{hyperref}
\usepackage{multicol}
\usepackage{bbm}
\usepackage{setspace}
\usepackage{enumitem}
\usepackage{cleveref}
\usepackage[toc]{appendix}
\usepackage[norelsize,ruled, lined,linesnumbered]{algorithm2e}
\usepackage{amsthm}
\usepackage[symbol]{footmisc}
\usepackage[mathscr]{euscript}
\usepackage[left=0.9in,top=0.9in,right=0.9in,bottom=1in,nohead]{geometry}
\usepackage{empheq}
\usepackage{mathtools}

\DeclareMathOperator{\argmin}{argmin}

\makeatletter
\newtheorem*{rep@theorem}{\rep@title}
\newcommand{\newreptheorem}[2]{%
	\newenvironment{rep#1}[1]{%
		\def\rep@title{#2 \ref{##1}}%
		\begin{rep@theorem}}%
		{\end{rep@theorem}}}
\makeatother
\SetAlgoNlRelativeSize{-1}
\SetKwFor{For}{for}{}{end}
\SetKwFor{While}{while}{}{end}
\newreptheorem{example}{Example}
\newtheorem{theorem}{Theorem}

\newtheorem{lemma}{Lemma}

\newtheorem{definition}{Definition}
\newtheorem{corollary}{Corollary}

\usepackage{xcolor}
\definecolor{forestgreen}{RGB}{34,139,34}
\usepackage[colorinlistoftodos,prependcaption,textsize=tiny]{todonotes}
\usepackage{xargs}
\usepackage{booktabs}
\usepackage{caption}
\usepackage{subcaption}

\usepackage{lineno}
\usepackage{float}
\usepackage[section]{placeins}
\usepackage{arydshln}
\usepackage{latexsym}
\usepackage{epic}
\usepackage{amsmath,amsthm,amssymb}
\usepackage{graphicx}
\usepackage{url}
\usepackage{pgfplots}\usetikzlibrary{plotmarks}
\usepackage{soul}
\usepackage{afterpage}
\usepackage{cleveref}
\usepackage{bbm}
\usetikzlibrary{decorations.markings}
\usepackage{setspace}
\usepackage{appendix}
\usepackage{comment}
\DeclareGraphicsExtensions{.pdf,.png,.jpg,.bmp}

\allowdisplaybreaks

\begin{document}

	\begin{frontmatter}
\title{\onehalfspacing On the Tightness of Standard Relaxations for \\ 
	Mixed-Integer Bilevel Linear Programs}
		
		\author[label1]{Sergey S.~Ketkov\footnote[2]{Corresponding author. Email: sergei.ketkov@business.uzh.ch; phone: +41 078 301 85 21.}}
		\author[label1]{Oleg A. Prokopyev}
		\address[label1]{Department of Business Administration, University of Zurich, Zurich, 8032, Switzerland}
		\begin{abstract}
			\doublespacing
	   Exact algorithms for solving \textit{mixed-integer bilevel linear programs} (MIBLPs) typically rely on sequences of lower and upper bounds that converge to the optimal value. These procedures are commonly initialized using the \textit{single-level relaxation} (SLR), obtained by omitting the follower's optimality condition and solving the resulting single-level optimization problem. In this paper, we investigate whether, for broad classes of MIBLPs, the resulting standard bounds admit \emph{uniform} improvements that can be computed within the same computational complexity regime. For \textit{pure continuous} bilevel linear programs, we show that, unless $P = NP$, neither the SLR-based lower bound nor its associated upper bound can be uniformly improved in polynomial time, even for the class of min-max problems. We then extend this analysis to the class of \textit{pure integer} min-max~bilevel linear programs under the assumption that the polynomial hierarchy does not collapse. 
	   First, we show that the continuous relaxation of the SLR admits no uniform polynomial-time computable improvement. We then prove that neither the SLR itself nor its associated upper bound admits a uniform improvement by a polynomial-time algorithm with access to  a mixed-integer linear programming (MILP) oracle.
	   Importantly, this rules out uniform improvements by iterative MILP-based approaches, including cutting-plane-based and decomposition algorithms. Overall, our results demonstrate that the SLR-based bounds are, in a complexity-theoretic sense, unimprovable systematically within their natural computational regimes.
		\end{abstract}
		\begin{keyword} \doublespacing
		Bilevel optimization; Mixed-integer optimization; Single-level relaxation; $NP$-hardness.
		\end{keyword}
		
	\end{frontmatter}
	\doublespacing
\section{Introduction}	
\textit{Bilevel optimization} addresses hierarchical optimization problems involving an upper-level decision-maker (the \textit{leader}) and a lower-level decision-maker (the \textit{follower}). The leader makes a decision first, optimizing its objective function and anticipating that the follower subsequently solves its own optimization problem, parameterized by the leader's decision. Comprehensive surveys of bilevel optimization problems and their applications can be found in \cite{Colson2007, Kleinert2021, Sinha2017}.

In this paper, we consider a class of optimistic \textit{mixed-integer bilevel linear programs} (MIBLPs) of the~form:
\begin{subequations} \label{MIBLP} 
	\begin{align} [\textbf{BP}]:\quad z^*_{\,BP} := \min_{\mathbf{x},\mathbf{y}^*}\;& \mathbf{a}^\top \mathbf{x} + \mathbf{d}^\top \mathbf{y}^* \label{obj: leader} \\ \text{s.t.}\;& \mathbf{x} \in X \label{cons: leader} \\
	&\mathbf{y}^* \in \argmin_{\,\mathbf{y} \in Y(\mathbf{x})} \mathbf{g}^\top \mathbf{y}, \label{cons: follower} \end{align} \end{subequations}
where
\begin{subequations}
	\begin{align} 
		& X := \Big\{\mathbf{x} \in \mathbb{R}_+^{n_1}\times \mathbb{Z}_+^{n_2}: \mathbf{H} \mathbf{x} \leq \mathbf{h} \Big\}, \label{eq: leader's feasible set} \\ &	Y(\mathbf{x}) := \Big\{\mathbf{y} \in \mathbb{R}_+^{m_1}\times \mathbb{Z}_+^{m_2}: \mathbf{L} \mathbf{x} + \mathbf{F} \mathbf{y} \leq \mathbf{f}\Big\} \label{eq: follower's feasible set}
	\end{align}	
\end{subequations}
are, respectively, the leader's and the follower's feasible sets. The parameters of the problem are given by
$\mathbf{H} \in \mathbb{Q}^{p \times n}$,
$\mathbf{L} \in \mathbb{Q}^{q \times n}$, 
$\mathbf{F} \in \mathbb{Q}^{q \times m}$, $\mathbf{h} \in \mathbb{Q}^{p}$, $\mathbf{f} \in \mathbb{Q}^{q}$, $\mathbf{a} \in \mathbb{Q}^{n}$, $\mathbf{d} \in \mathbb{Q}^{m}$ and~$\mathbf{g} \in \mathbb{Q}^{m}$, where $n := n_1 + n_2$ and $m := m_1 + m_2$; see, e.g., \cite{Audet1997,Kleinert2021}. In particular, [\textbf{BP}] is referred to as a \textit{min-max} problem when~$\mathbf{g} = -\mathbf{d}$.

MIBLPs of the form~[\textbf{BP}] arise naturally in the contexts of network design \cite{Ben1992,Fontaine2014}, electricity markets and energy systems~\cite{Baringo2012, Wogrin2020}, supply chain management~\cite{Yue2017}, as well as interdiction of critical infrastructure~\cite{Borrero2019, Caprara2016}.   
Thus, in line with the majority of the bilevel optimization literature, [\textbf{BP}] is formulated in the \textit{optimistic} sense. In other words, among all follower-optimal solutions in~(\ref{cons: follower}), the one minimizing the leader's objective function~(\ref{obj: leader}) is selected; see, e.g., \cite{Dempe2002,Wiesemann2013}.
Furthermore, we assume that the leader's feasible set $X$ does not depend on the follower's optimal solution, i.e., \textit{coupling constraints} are absent.
As outlined below, these two assumptions allow us to construct standard lower and upper bounds for the optimal objective function~value~$z^*_{\,BP}$.  

%


\subsection{Single-Level Relaxation} \label{sec: intro SLR}
Most existing solution methods for~[\textbf{BP}] rely on decomposition and cutting-plane-based techniques. These methods iteratively construct sequences of valid lower and upper bounds that converge to the optimal objective function value~$z^*_{\,BP}$; see, e.g.,~\cite{Caprara2016,Fischetti2017,Moore1990} and the survey in \cite{Kleinert2021}. In particular, the initial bounds are typically obtained from the \textit{single-level relaxation} of [\textbf{BP}] (also known as \textit{high-point relaxation}), defined by dropping the follower's optimality condition in (\ref{cons: follower}), i.e.,
 \begin{subequations} \label{SLR} 
	\begin{align} [\textbf{SLR}]:\quad z^*_{\,SLR} := \min_{\mathbf{x},\mathbf{y}}\;& \mathbf{a}^\top \mathbf{x} + \mathbf{d}^\top \mathbf{y}  \\ \text{s.t. } & \mathbf{x} \in X, \quad
		\mathbf{y} \in Y(\mathbf{x}).
\end{align} \end{subequations}

Under standard regularity assumptions, $z^*_{\,SLR}$ is a valid lower bound for $z^*_{\,BP}$. Moreover, in the absence of coupling constraints, one may also construct an associated upper bound. To this end, let~$\hat{\mathbf{x}}^*$ denote a leader-optimal solution of [\textbf{SLR}] and let $\hat{\mathbf{y}}^*$ be an associated follower's optimal response, i.e.,
\begin{equation} \label{eq: follower's optimal response}
	\hat{\mathbf{y}}^* \in \argmin_{\,\mathbf{y}} \Big\{ \mathbf{g}^\top \mathbf{y}:\; \mathbf{y} \in Y(\hat{\mathbf{x}}^*) \Big\}.
\end{equation} 
Whenever well-defined, the pair $(\hat{\mathbf{x}}^*, \hat{\mathbf{y}}^*)$ is bilevel feasible for~[\textbf{BP}], and hence 
\begin{equation} \label{eq: upper bound}
	\hat{z}_{\,U}(\hat{\mathbf{x}}^*, \hat{\mathbf{y}}^*) := \mathbf{a}^\top \hat{\mathbf{x}}^* + \mathbf{d}^\top \hat{\mathbf{y}}^*
\end{equation}
is a valid upper bound for $z^*_{\,BP}$. As a result, we have \[z^*_{\,SLR} \leq z^*_{\,BP} \leq \hat{z}_{\,U},\]
where the initial gap $\hat{z}_{\,U} - z^*_{\,SLR}$ often affects the computational effort required for solving~[\textbf{BP}].


 
 Although~[\textbf{SLR}] is known to provide arbitrarily loose bounds for certain problem classes \cite{Caprara2016,Kleinert2021}, the complexity-theoretic properties of the lower bound $z^*_{\,SLR}$ and the upper bound $\hat z_{\,U}$ have, to the best of our knowledge, not been systematically investigated. 
 This naturally raises the question of whether the standard SLR-based bounds admit, in a sense, \textit{strict} and \textit{uniform} improvements with respect to a broad class of bilevel problems.
 
 Related complexity-theoretic limitations on improving standard optimization bounds have been established by Busygin and Pasechnik~\cite{Busygin2006} for the maximum independent set problem. For example, they show that, unless $P=NP$, no polynomial-time computable upper bound on the independence number can be, in a sense, tighter than the Lov{\'a}sz-number bound \cite{Lovasz1979}. Kahruman-Anderoglu et~al.~\cite{Kahruman2016} subsequently introduce the related notion of \textit{provably best} construction heuristics and establish that several standard greedy heuristics for the maximum clique problem admit no strict polynomial-time improvement simultaneously for all relevant graph instances. 
 
 Motivated by this perspective, we formulate the following research question:
 

\begin{itemize} 
	\item[$ $] [\textbf{RQ}]: Given a class of bilevel problems $\mathcal{C}$, do there exist alternative
	lower and upper bounds $\beta_L$ and $\beta_{\,U}$, computable within the same computational complexity
	regime as $z^*_{\,SLR}$ and $\hat z_{\,U}$, respectively, such that, for \textit{every} instance in $\mathcal{C}$, \[
	z^*_{\,SLR} \leq \beta_{\,L} \leq z^*_{\,BP}
	\leq \beta_{\,U} \leq \hat z_{\,U} ,
	\]
   with $\beta_{\,L} > z^*_{\,SLR}$ whenever $z^*_{\,SLR} < z^*_{\,BP} $, and
   $\beta_{\,U} <\hat z_{\,U}$ whenever $z^*_{\,BP} <\hat z_{\,U}$. 
\end{itemize}
Whenever such bounds $\beta_L$ or $\beta_{\,U}$ do not exist, we say that the corresponding bound $z^*_{\,SLR}$ or $\hat z_{\,U}$, respectively, cannot be uniformly improved and is \textit{provably best} with respect to the class $\mathcal{C}$.

As we demonstrate later in Sections \ref{sec: blp} and \ref{sec: miblp}, the research question~[\textbf{RQ}] is particularly relevant for the \textit{pure continuous} case ($n_2 = m_2 = 0$) and the \textit{pure integer} case ($n_1 = m_1 = 0$). In these cases, standard regularity conditions guarantee the existence of an optimal solution to [\textbf{BP}], while there exists a complexity gap between computing the initial bounds, $z^*_{\,SLR}$ and $\hat{z}_{\,U}$, and solving the underlying bilevel problem. Finally, we note that when the leader's continuous variables in [\textbf{BP}] affect the follower's feasible set~(\ref{eq: follower's feasible set}) and $m_2 \neq 0$, an optimal solution to [\textbf{BP}] is not attained in general, even when both the leader's and the follower's feasible sets are nonempty and compact; see, e.g.,~\cite{Koppe2010}. 



\subsection{Our Contributions} \label{sec: intro contributions} We first analyze the pure continuous case ($n_2 = m_2 = 0$), in which [\textbf{BP}] reduces to a \textit{bilevel linear program} (BLP). Notably, optimistic BLPs constitute the simplest canonical class of bilevel problems for which the single-level relaxation~[\textbf{SLR}] and the associated upper bound~(\ref{eq: upper bound}) are naturally defined. In addition, when $m_2 = 0$, replacing the follower's problem in (\ref{cons: follower}) with its Karush--Kuhn--Tucker (KKT) optimality conditions and linearizing the resulting complementarity constraints yields a standard single-level mixed-integer linear programming (MILP) reformulation of [\textbf{BP}]; see, e.g.,~\cite{Audet1997,Fontaine2014,Zare2019}. The~linear programming relaxation of this MILP reformulation, in turn, yields an alternative lower bound for~[\textbf{BP}], which we also examine in the context of the research question [\textbf{RQ}]. 

Importantly, all our complexity-theoretic results are established for the restrictive class of \textit{min-max} problems. In this setting, [\textbf{RQ}] is particularly relevant, as the single-level relaxation [\textbf{SLR}] effectively reverses the follower's optimization direction in (\ref{cons: follower}), thereby providing relatively weak lower bounds; see, e.g.,~\cite{Caprara2016,Kleinert2021}. 

Our contributions for the class of min-max BLPs can be summarized as follows: 
\begin{itemize}
	\item We show that, unless $P=NP$, the standard lower bound
	$z^*_{\,SLR}$ is provably best
	(Theorem~\ref{theorem 1}). A similar result is then established for the lower bound obtained from the linear programming relaxation of the KKT-based MILP reformulation (Corollary~\ref{corollary 1}).
	
	\item We show that, when [\textbf{SLR}] admits multiple leader-optimal
	solutions, computing the tightest upper bound of the form (\ref{eq: upper bound}) is $NP$-hard (Theorem~\ref{theorem 2}).
	
	\item Finally, even when the upper bound
	(\ref{eq: upper bound}) is uniquely defined, we demonstrate that it is provably best unless $P=NP$
	(Theorem~\ref{theorem 3}).
\end{itemize}

Complementing these complexity-theoretic results, we also identify conditions under which the SLR-based bounds are guaranteed to provide a good approximation of the bilevel optimal value $z^*_{\, BP}$. Specifically we show that, under standard regularity assumptions, the gap between the SLR-based lower and upper bounds decreases at least linearly with the distance between the leader's and the follower's objective vectors (Theorem~\ref{theorem: approximation}). That is, both bounds converge to $z^*_{\, BP}$ as the two objectives become aligned. Notably, this approximation result also applies to the pure integer case ($n_1 = m_1 = 0$), in which~[\textbf{BP}] reduces to an \textit{integer bilevel linear program} (IBLP).

We next extend our complexity analysis to the pure integer case. In contrast to BLPs, IBLPs constitute a standard class of bilevel problems for which decomposition and cutting-plane-based methods are extensively used; see, e.g., \cite{Caprara2016,DeNegre2011}. However,  analyzing the quality of the SLR-based bounds for IBLPs requires complexity arguments at the second level of the polynomial hierarchy; see, e.g., \cite{Jeroslow1985} and Section \ref{sec: miblp} for further details. Moreover, in addition to the standard SLR-based lower bound $z^*_{\,SLR}$, we analyze a potentially weaker lower bound obtained from the linear programming relaxation of [\textbf{SLR}]. 

Let an \textit{MILP oracle} denote an oracle that, given any mixed-integer linear program~(MILP) of polynomial encoding size, returns an optimal solution. Assuming that the polynomial hierarchy does not collapse, we make the following contributions for the class of min-max~IBLPs:
\begin{itemize}
	\item We show that the linear programming relaxation of [\textbf{SLR}] provides a provably best \textit{polynomial-time
	computable} lower bound  (Theorem~\ref{theorem 4}).
	
	\item We show that the standard lower bound
	$z^*_{\,SLR}$ is provably best among lower bounds that are computable by polynomial-time algorithms with access to an MILP oracle (Theorem~\ref{theorem 5}).
	
	\item Finally, we establish the analogous result for the associated SLR-based upper bound~(\ref{eq: upper bound}), even when it is uniquely defined (Theorem~\ref{theorem 6}).
\end{itemize}
A summary of our key complexity-theoretic results is provided in~Table~\ref{tab:summary}.

Taken together, our results establish that, within their natural computational complexity regimes, the standard SLR-based lower and upper bounds admit no systematic improvement over broad classes of MIBLPs. In particular, our results for min-max IBLPs imply that, under standard complexity-theoretic assumptions, no generic decomposition or cutting-plane-based framework can be expected to systematically strengthen the standard SLR-based bounds within a polynomial number of~iterations. This result, however, does not rule out the existence of stronger bounds for particular instances or more restrictive classes of MIBLPs.  

\begin{table}[t] 
	\centering \doublespacing
	
	\small 
	\begin{tabular}{llll} 
		\toprule 
		Problem class & Bound & Result & Assumption \\ 
		\midrule 
		\multirow{3}{*}{Min-max BLPs}
		& SLR-based lower bound 
			& \multirow{3}{*}{$\left.\vcenter{\hbox{\rule{0pt}{3.5em}}}\right\}$ Provably best} 
		& $P \neq NP$ \\ 
		
		& KKT-relaxation lower bound 
		& 
		& $P \neq NP$ \\ 
		
		& SLR-based upper bound 
		& 
		& $P \neq NP$; uniqueness\\ 
		\hline
		\multirow{3}{*}{Min-max IBLPs}
		& LP relaxation of SLR 
		& \multirow{3}{*}{$\left.\vcenter{\hbox{\rule{0pt}{3.5em}}}\right\}$ Provably best}  
		& $P \neq NP$ \\ 
		
		& SLR-based lower bound 
		&
		& $\Delta_2^P\neq\Sigma_2^P$ \\ 
		
		& SLR-based upper bound 
		&
		& $\Delta_2^P\neq\Sigma_2^P$; uniqueness \\ 
		\bottomrule 
	\end{tabular} 
	
	\medskip 
	\footnotesize 
	\caption{Summary of the main complexity-theoretic results.  We use the standard complexity classes
		$\Sigma_2^P=NP^{NP}$, the second level of the polynomial hierarchy, and
		$\Delta_2^P=P^{NP}$, the class of problems solvable in deterministic polynomial time with access to an $NP$ oracle. Furthermore, uniqueness refers to the leader-optimal solution of [\textbf{SLR}] and the associated follower's problem (\ref{eq: follower's optimal response}). } 
	\label{tab:summary} 
\end{table}

The remainder of the paper is organized as follows. Sections~\ref{subsec: lower bounds} and~\ref{subsec: upper bounds} investigate the SLR-based lower and upper bounds for pure continuous BLPs, respectively. Section \ref{subsec: approximation} analyzes the quality of the SLR-based bounds when the decision-makers' objective functions are well aligned. In Section~\ref{subsec: poly lower bounds}, we analyze the SLR-based polynomial-time computable lower bound for IBLPs. Sections~\ref{subsec: np lower bounds} and~\ref{subsec: np upper bounds} investigate MILP-oracle computable lower and upper bounds for IBLPs, respectively. Finally, Section~\ref{sec: conclusions} concludes the paper and outlines directions for future research.

\textbf{Notation.} We use $\mathbb{R}_+$, $\mathbb{Z}_+$, and $\mathbb{Q}$ to denote the sets of nonnegative real numbers, nonnegative integers, and rational numbers, respectively. For any positive integer $k$, let $[k]:= \{1,\ldots,k\}$.
Vectors and matrices are denoted by boldface letters, with $\mathbf{1}$ representing the all-ones~vector of appropriate dimension. Finally, $\Vert \cdot \Vert$ denotes an arbitrary norm, and $\Vert \cdot \Vert_*$ its dual norm.

\section{Bilevel Linear Programs} \label{sec: blp}
The most well-studied class of MIBLPs are continuous \textit{bilevel linear programs} (BLPs), where the leader and the follower solve linear programs; see, e.g., \cite{Audet1997, Hansen1992} and the survey in~\cite{Kleinert2021}. Formally, BLPs correspond to [\textbf{BP}] with $n_2 = m_2 = 0$, i.e., 
\begin{subequations} \label{BLP} 
	\begin{align} [\textbf{BLP}]:\quad z^*_{\,BLP} := \min_{\mathbf{x},\mathbf{y}^*}\;& \mathbf{a}^\top \mathbf{x} + \mathbf{d}^\top \mathbf{y}^* \\ \text{s.t. }  & \mathbf{x} \in X^{c}  \label{cons: leader continuous} \\
		&\mathbf{y}^* \in \argmin_{\,\mathbf{y} \in Y^{c}(\mathbf{x})} \, \mathbf{g}^\top \mathbf{y}, \label{cons: follower continuous} \end{align} 
	\end{subequations}	
where $X^c := \{\mathbf{x} \in \mathbb{R}_+^{n_1}: \mathbf{H} \mathbf{x} \leq \mathbf{h} \}$ and $Y^{c}(\mathbf{x}) := \{\mathbf{y} \in \mathbb{R}_+^{m_1}: \mathbf{L} \mathbf{x} + \mathbf{F} \mathbf{y} \leq \mathbf{f}\}$. 

We make the following standard assumption (see, e.g., \cite{Audet1997, Kleinert2021}):
\begin{itemize}
	\item[\textbf{A1.}] The leader's feasible set $X^c$ is nonempty and bounded, and the follower's feasible set~$Y^c(\mathbf{x})$ is nonempty and bounded for all $\mathbf{x} \in X^c$.
\end{itemize}		
In particular, we note that, under Assumption~\textbf{A1}, both [\textbf{BLP}] and its single-level relaxation admit finite optimal solutions. 

It is known that [\textbf{BLP}] is strongly~$NP$-hard, even in the min-max case where $\mathbf{g} = -\mathbf{d}$; see, e.g.,~\cite{Hansen1992}. In contrast, when $n_2 = m_2 = 0$, the single-level relaxation [\textbf{SLR}] and the follower's problem in~(\ref{eq: follower's optimal response}) reduce
to linear programs. Therefore, both the lower bound $z^*_{\,SLR}$ and the
upper bound~$\hat{z}_{\,U}$ defined in~(\ref{eq: upper bound}) can be
computed in polynomial time. Given the computational complexity gap between solving [\textbf{BLP}] and computing the associated SLR-based bounds, it is therefore natural to investigate whether~$z^*_{\,SLR}$ or $\hat{z}_{\,U}$ admit  strict and uniform polynomial-time computable improvements in the sense of the research question~[\textbf{RQ}]. 

\subsection{Lower Bounds} \label{subsec: lower bounds}

Let us denote by $\mathcal{C}$ a class of BLPs of the form [\textbf{BLP}]. First, we analyze potential polynomial-time improvements of the lower bound $z^*_{\,SLR}(I)$ for all instances $I \in \mathcal{C}$ where $z^*_{\,SLR}(I)$ is not tight. That is, we consider the following decision problem:
\begin{itemize}
	\item[$ $] [\textbf{L-D}]: Given a class $\mathcal{C}$ of BLPs satisfying Assumption \textbf{A1}, does there exist a polynomial-time computable bound~$\beta_{\,L}$ such~that
	\begin{equation} \nonumber
		z^*_{\,SLR}(I) \leq \beta_{\,L}(I) \leq z^*_{\,BLP}(I) \quad \forall I \in \mathcal{C},
	\end{equation}
with strict inequality $\beta_{\,L}(I) > z^*_{\,SLR}(I)$ whenever $z^*_{\,SLR}(I) < z_{\,BLP}^*(I)$?	
\end{itemize}
The dependence on $I$ in [\textbf{L-D}] is omitted whenever clear from context. 
Furthermore, if the answer to~[\textbf{L-D}] is negative, then we refer to the corresponding lower bound $z^*_{\,SLR}$ as \textit{provably best}. The following result holds. 

\begin{theorem} \label{theorem 1}
	Unless $P = NP$, $z^*_{\,SLR}$ is provably best in the sense of \upshape [\textbf{L-D}], \itshape even when $\mathcal{C}$ is restricted to the class of min-max BLPs satisfying Assumption~\textbf{A1}.
	\begin{proof}
	Assume to the contrary that the answer to [\textbf{L-D}] is positive. That is, there exists  a polynomial-time computable bound $\beta_{\,L}$ such that $z^*_{\,SLR} < \beta_{\,L} \leq~z^*_{\,BLP}$ whenever~$z^*_{\,SLR} < z^*_{\,BLP}$, and $z^*_{\,SLR} = \beta_{\,L} = z^*_{\,BLP}$, otherwise. Since $z^*_{\,SLR}$ and~$\beta_{\,L}$ are polynomial-time computable, and $z^*_{\,SLR} = \beta_{\,L}$ if and only if $z^*_{\,SLR} = z^*_{\,BLP}$, it follows that the equality $z^*_{\,SLR} = z^*_{\,BLP}$ can be verified in polynomial time. 
		
		To derive a contradiction, consider an instance of 3-\textsc{SAT} given by a Boolean formula 
		\[
		\varphi := C_1 \wedge C_2 \wedge \cdots \wedge C_m
		\]
		over $n$ variables. Each clause $C_j$,
		$j \in [m]$, contains exactly three literals, where a literal
		is either $x_i$ or~$\neg x_i$ for some $i \in [n]$. The problem of determining whether $\varphi$ admits a satisfying assignment is known to be $NP$-complete \cite{Garey1979}. 
		
		Next, for each clause $C_j$, we define the index sets 
		\[
		P_j := \{ i : \text{literal } x_i \text{ appears in } C_j \} \; \mbox{ and } \;
		N_j := \{ i : \text{literal } \neg x_i \text{ appears in } C_j \},
		\]	
        and introduce the following associated min-max problem:
	\begin{subequations} \label{3 SAT min-max}
		\begin{align}
			z^*_{\,BLP} = \min_{\mathbf{x} } & \max_{\mathbf{y}} \Big\{\sum_{i = 1}^n y_i: \; \mathbf{0} \leq \mathbf{y} \leq \mathbf{x}, \; \mathbf{y} \leq \mathbf{1} - \mathbf{x} \Big\} \\
			\text{s.t. } &
			\sum_{i\in P_j} x_i + \sum_{i\in N_j} (1-x_i) \geq 1 \quad \forall j \in [m] \label{cons: 3 SAT min-max 1}\\
			& \mathbf{\mathbf{x}} \in [0,1]^n. \label{cons: 3 SAT min-max 2}
		\end{align}
	\end{subequations}
	It is rather easy to verify that Assumption \textbf{A1} holds. Moreover, if $\varphi$ admits a satisfying assignment $\tilde{\mathbf{x}} \in \{0, 1\}^n$, then  
	\[z^*_{\,BLP} = \max_{\mathbf{y}} \Big\{\sum_{i = 1}^n y_i: \; \mathbf{0} \leq \mathbf{y} \leq \tilde{\mathbf{x}}, \; \mathbf{y} \leq \mathbf{1} - \tilde{\mathbf{x}} \Big\} = \sum_{i = 1}^n \min\{\tilde{x}_i, 1 - \tilde{x}_i\} = 0.\]
	Otherwise, any feasible $\mathbf{x}$ in (\ref{3 SAT min-max}) has at least one fractional component, and hence $z^*_{\,BLP} > 0$. 
	
	On the other hand, the single-level relaxation of (\ref{3 SAT min-max}) is obtained by replacing the min-max problem~(\ref{3 SAT min-max}) with the min-min problem. By setting $x_i = \frac{1}{2}$ and $y_i = 0$, $i \in [n]$, we conclude $z^*_{\,SLR} = 0$. Hence, unless $P = NP$, the equality $z^*_{\,SLR} = z^*_{\,BLP}$ cannot be verified in polynomial time, and the result~follows.
	\end{proof}
\end{theorem}

As outlined in Section~\ref{sec: intro contributions}, [\textbf{BLP}] can also be reformulated as a single-level MILP by replacing the follower's problem in~(\ref{cons: follower continuous}) with its KKT optimality conditions and linearizing the resulting complementary slackness constraints using binary variables; see, e.g.,~\cite{Audet1997}. The resulting MILP reformulation of [\textbf{BLP}] reads as \begin{subequations} \label{MILP} \begin{align}
		\min_{\mathbf{x}, \mathbf{y}, \boldsymbol{\lambda}, \boldsymbol{\nu}, \mathbf{u}, \mathbf{v}} \; &  \mathbf{a}^{\top}\mathbf{x} + \mathbf{d}^{\top}\mathbf{y}  \\ 
		\text{s.t. } & \mathbf{x} \in X^c \label{cons: leader feasibility} \\ 
		& \mathbf{g} + \mathbf{F}^\top \boldsymbol{\lambda} - \boldsymbol{\nu} = \mathbf{0} \label{cons: stationarity}\\
		& \mathbf{0} \leq \boldsymbol{\lambda} \leq M \mathbf{u} \label{cons: big M 1 dual}\\
		& \mathbf{0} \leq \mathbf{f} - \mathbf{L} \mathbf{x} - \mathbf{F} \mathbf{y} \leq M (\mathbf{1} - \mathbf{u}) \label{cons: big M 1 primal} \\
		& \mathbf{0} \leq \boldsymbol{\nu} \leq M \mathbf{v} \label{cons: big M 2 dual}\\
		& \mathbf{0} \leq \mathbf{y} \leq M (\mathbf{1} - \mathbf{v}) \label{cons: big M 2 primal} \\
		& \mathbf{u} \in \{0, 1\}^{q}, \quad \mathbf{v} \in \{0, 1\}^{m_1},
\end{align} \end{subequations}
where $M > 0$ is a sufficiently large constant. We then consider a linear programming relaxation of (\ref{MILP}) given by:
 \begin{subequations} \label{KKT-LP} \begin{align}
 		z^*_{\,KKT} := \min_{\mathbf{x}, \mathbf{y}, \boldsymbol{\lambda}, \boldsymbol{\nu}, \mathbf{u}, \mathbf{v}} \; & \ \mathbf{a}^{\top}\mathbf{x} + \mathbf{d}^{\top}\mathbf{y}  \\ 
 		\text{s.t. } & \text{(\ref{cons: leader feasibility})--(\ref{cons: big M 2 primal}),} \\
 		& \mathbf{u} \in [0, 1]^{q}, \quad \mathbf{v} \in [0, 1]^{m_1}.
 \end{align} \end{subequations}

Assuming that $M$ is chosen so that the MILP reformulation (\ref{MILP}) is exact, the optimal value $z^*_{\,KKT}$ provides another valid lower bound for [\textbf{BLP}], i.e.,
$z^*_{\,KKT}\leq z^*_{\,BLP}$. Furthermore, a sufficiently large valid $M$ with polynomial encoding size exists; see, e.g., \cite{Buchheim2023}. The following result~holds. 

\begin{corollary} \label{corollary 1}
	Assume that $M$ is chosen so that \upshape (\ref{MILP}) \itshape is an exact reformulation of \upshape [\textbf{BLP}]. \itshape Then, under the assumptions of Theorem \ref{theorem 1}, the lower bound $z^*_{\,KKT}$ is provably best.
	\begin{proof}
	Since the KKT-based linear programming relaxation (\ref{KKT-LP}) retains all constraints
	of [\textbf{SLR}] and $M$ is valid, we conclude that
	\[
	z^*_{\,SLR}(I)\leq z^*_{\,KKT}(I)\leq z^*_{\,BLP}(I)
	\qquad \forall I\in\mathcal C.
	\]
	
	Suppose that $z^*_{\,KKT}$ admits a polynomial-time computable bound $\beta_{\,L}$ such that $z^*_{\,KKT} < \beta_{\,L} \leq~z^*_{\,BLP}$ whenever~$z^*_{\,KKT} < z^*_{\,BLP}$, and $z^*_{\,KKT} = \beta_{\,L} = z^*_{\,BLP}$, otherwise. Then, whenever
	$z^*_{\,SLR} < z^*_{\,BLP}$, either
	$z^*_{\,SLR} < z^*_{\,KKT} \leq z^*_{\,BLP}$ or
	$z^*_{\,SLR} = z^*_{\,KKT} < z^*_{\,BLP}$.
	In the former case, $\beta_{\,L}' := z^*_{\,KKT}$ is polynomial-time computable and strictly improves $z^*_{\,SLR}$. In the latter case, we set $\beta'_{\,L} := \beta_{\,L} > z^*_{\,SLR}$. Thus, $\beta'_{\,L}$ is a polynomial-time computable strict and uniform improvement
	of~$z^*_{\,SLR}$, contradicting Theorem~\ref{theorem 1}.	
\end{proof}
\end{corollary}


\subsection{Upper Bounds} \label{subsec: upper bounds}

Importantly, since both [\textbf{SLR}] and the follower's problem in~(\ref{eq: follower's optimal response}) may admit multiple optimal solutions, the resulting upper bound $\hat{z}_{\,U}$ defined by equation (\ref{eq: upper bound}) is generally not unique. In this regard, we first define the tightest upper~bound 
\begin{equation} \label{eq: tightest upper bound}
\hat{z}^*_{\,U}
:=
\min_{\hat{\mathbf{x}}, \hat{\mathbf{y}}}
\left\{
\mathbf{a}^{\top}\hat{\mathbf{x}}
+
\mathbf{d}^{\top}\hat{\mathbf{y}}: \;
\hat{\mathbf{x}}\in X^*_{\,SLR}, \; \hat{\mathbf{y}} \in \argmin_{\mathbf{y}} \big\{ \mathbf{g}^\top \mathbf{y}:\; \mathbf{y} \in Y(\hat{\mathbf{x}}) \big\}, 
\right\}.
\end{equation}
where $X^*_{\,SLR}$ denotes the set of leader-optimal solutions of~[\textbf{SLR}].
In particular, $\hat{z}^*_{\,U} \leq \hat{z}_{\,U}$ for any upper bound $\hat{z}_{\,U}$ defined by equation (\ref{eq: upper bound}). 
The following results show that, unlike the lower bound~$z^*_{\,SLR}$, computing $\hat{z}^*_{\,U}$ is $NP$-hard, even for  min-max BLPs.

\begin{lemma} \label{lemma 1}
	Consider a 0-1 integer linear program (ILP) given by:
	\begin{equation} \label{ILP}
		z^*_{\,ILP} := \min \Big\{\mathbf{c}^\top \mathbf{x}: \; \mathbf{A}\mathbf{x} \leq \mathbf{b}, \; \mathbf{x} \in \{0, 1\}^n \Big\},
	\end{equation}
	and its linear programming relaxation
	\begin{equation} \label{LP relaxation}
		z^*_{\,LP} := \min \Big\{\mathbf{c}^\top \mathbf{x}: \; \mathbf{A}\mathbf{x} \leq \mathbf{b}, \; \mathbf{x} \in [0, 1]^n \Big\},
	\end{equation}
	where $(\mathbf{A}, \mathbf{b}, \mathbf{c})$ are rational. Then, deciding whether $z^*_{\,LP} = z^*_{\,ILP}$ is $NP$-hard. 
	\begin{proof}
		Similar to the proof of Theorem \ref{theorem 1}, we consider an instance of 3-SAT given by a Boolean formula~$\varphi = C_1 \wedge C_2 \wedge \cdots \wedge C_m$, where $P_j$ and $N_j$ denote the sets of positive and negative literals in clause $C_j$, $j \in [m]$, respectively. We consider the following $0$--$1$ ILP:
		\begin{subequations} \label{3-SAT}
			\begin{align}
				z^*_{\,ILP} = \min_{\mathbf{x}, t} & \; t \\
				\text{s.t. } &
				\sum_{i\in P_j} x_i + \sum_{i\in N_j} (1-x_i) \geq 1 - t\quad \forall j \in [m] \\
				& \mathbf{x} \in \{0,1\}^n, \; t \in \{0, 1\}. 
			\end{align}
		\end{subequations}
		Notably, $z^*_{\,ILP} = 0$ if and only if $\varphi$ admits a satisfying assignment. Moreover, the LP relaxation of~(\ref{3-SAT}) has optimal value $z^*_{\,LP} = 0$, for example, by setting $x_i = \tfrac{1}{2}$ for each $i \in [n]$. Since 3-SAT is $NP$-complete, deciding whether $z^*_{\,LP}=z^*_{\,ILP}$ is $NP$-hard, and the result~follows.  
	\end{proof}
\end{lemma}

\begin{theorem} \label{theorem 2}
Computing $\hat{z}^*_{\,U}$ for \upshape [\textbf{BLP}] \itshape is $NP$-hard, even when $\mathbf{g} = -\mathbf{d}$.
\begin{proof}
Consider the 0-1 ILP given by (\ref{ILP}) and the following associated min-max problem:
	\begin{subequations} \label{ILP min-max}
	\begin{align}
		z^*_{\,BLP} = \min_{\mathbf{x}} \;& \left\{\mathbf{c}^\top \mathbf{x} +  \max_{\mathbf{y}} \Big\{\sum_{i = 1}^n y_i: \; \mathbf{0} \leq \mathbf{y} \leq \mathbf{x}, \; \mathbf{y} \leq \mathbf{1} - \mathbf{x} \Big\}\right\}  \\
		\text{s.t. } & \mathbf{A} \mathbf{x} \leq \mathbf{b} \label{cons: ILP min-max 1} \\
		& \mathbf{\mathbf{x}} \in [0,1]^n. \label{cons: ILP min-max 2} 
	\end{align}
\end{subequations}
First, we note that the single-level relaxation of (\ref{ILP min-max}) reads as
	\begin{subequations} \label{ILP SLR min-max}
	\begin{align}
		z^*_{\,SLR} = \min_{\mathbf{x}, \mathbf{y}} \;& \mathbf{c}^\top \mathbf{x} + \sum_{i = 1}^n y_i  \\
		\text{s.t. } & \text{(\ref{cons: ILP min-max 1})--(\ref{cons: ILP min-max 2})}, \\
		& \mathbf{0} \leq \mathbf{y} \leq \mathbf{x} \\ 
		& \mathbf{y} \leq \mathbf{1} - \mathbf{x}. 
	\end{align}
\end{subequations}
By construction, (\ref{ILP SLR min-max}) admits an optimal solution with $\mathbf{y} = \mathbf{0}$, and thus (\ref{ILP SLR min-max}) coincides with the LP relaxation (\ref{LP relaxation}), i.e., 
\begin{equation} \label{SLR min-max}
	z^*_{\,SLR} = z^*_{\,LP} := \min \Big\{\mathbf{c}^\top \mathbf{x}: \; \mathbf{A}\mathbf{x} \leq \mathbf{b}, \; \mathbf{x} \in [0, 1]^n \Big\}.
\end{equation}

Let $X^*_{\,SLR}$ denote the set of optimal solutions of (\ref{SLR min-max}). Then, the tightest upper bound~(\ref{eq: tightest upper bound}) can be expressed as
\begin{equation} \label{eq: tightest upper bound min-max}
	\begin{aligned}
	\hat{z}^*_{\,U} 
	:&=
	\min_{\hat{\mathbf{x}}}
	\Big\{
	\mathbf{c}^{\top}\hat{\mathbf{x}}
	+
	\sum_{i = 1}^n \min\{\hat{x}_i, 1 - \hat{x}_i\}: \;
	\hat{\mathbf{x}} \in X^*_{\,SLR}  
	\Big\} \\ & = z^*_{\,LP} + \min_{\hat{\mathbf{x}}}
	\Big\{
	\sum_{i = 1}^n \min\{\hat{x}_i, 1 - \hat{x}_i\}: \;
	\hat{\mathbf{x}} \in X^*_{\,SLR}  
	\Big\},
	\end{aligned}
\end{equation}
where we additionally use the fact that $\mathbf{c}^{\top} \hat{\mathbf{x}} = z^*_{\,LP}$ for any $\hat{\mathbf{x}} \in X^*_{\,SLR}$.  
From (\ref{eq: tightest upper bound min-max}), we observe that $\hat{z}^*_{\,U} = z^*_{\,LP}$
if and only if  $z^*_{\,ILP} = z^*_{\,LP}$, or equivalently, there exists an optimal solution $\hat{\mathbf{x}} \in X^*_{\,SLR}$ such that $\hat{\mathbf{x}} \in \{0,1\}^n$.
By Lemma~\ref{lemma 1}, deciding whether $z^*_{\,ILP} = z^*_{\,LP}$ is $NP$-hard. Since \(z^*_{\,LP}\) can be computed in polynomial time, we conclude that computing $\hat{z}^*_{\,U}$ is $NP$-hard, and the result follows.
\end{proof}
\end{theorem}

Similar to the analysis of the lower bound in Section~\ref{subsec: lower bounds}, we introduce the following decision problem:
\begin{itemize}
	\item[$ $] [\textbf{U-D}]: Given a class $\mathcal{C}$ of BLPs  satisfying Assumption~\textbf{A1}, does there exist a polynomial-time computable bound~$\beta_{\,U}$ such~that
	\begin{equation} \nonumber
		 z^*_{\,BLP}(I) \leq \beta_{\,U}(I) \leq \hat{z}^*_{\,U}(I) \quad \forall I \in \mathcal{C},
	\end{equation}
	with strict inequality $\beta_{\,U}(I) < \hat{z}^*_{\,U}(I)$ whenever $z^*_{\,BLP}(I) < \hat{z}_{\,U}^*(I)$?	
\end{itemize}
If the answer to [\textbf{U-D}] is negative, then we say that the corresponding upper bound $\hat{z}_{\,U}^*$ is provably~best.

However, based on the result of Theorem \ref{theorem 2}, we further restrict our attention to instances of [\textbf{BLP}] for which both the single-level relaxation [\textbf{SLR}] and the associated follower's problem in (\ref{eq: follower's optimal response}) admit unique optimal solutions. In this case, the upper bound in (\ref{eq: upper bound}) is uniquely determined, and therefore the tightest upper bound~(\ref{eq: tightest upper bound}) can be computed in polynomial time. Next, we show that $\hat{z}_{\,U}^*$ is provably best even for the class of min-max BLPs. The following preliminary result holds.
\begin{lemma} \label{lemma 2}Let
	\begin{equation} \label{eq: polytope}
		P:=\{\mathbf{u}\in[0,1]^n:\mathbf{A}\mathbf{u}\leq\mathbf{b}\}
	\end{equation}
	be a nonempty polytope, where
	$\mathbf{A}\in\{-1,0,1\}^{m\times n}$,
	$\mathbf{b}\in\mathbb{Z}^m$, and each row of $\mathbf{A}$ contains at most $q \geq 1$ nonzero entries. If
	$P\cap\{0,1\}^n=\emptyset$, then
	\[
	\min_{\mathbf{u}\in P} \,
	\sum_{i=1}^n \min\{u_i,1-u_i\}
	\ge
	q^{-n/2}.
	\]
 \begin{proof} The function
 	\[
 	\psi(\mathbf{u})
 	:=
 	\sum_{i=1}^n \min\{u_i,1-u_i\}
 	\]
 	is concave, and $P$ is nonempty and compact.  Hence, the minimum of $\psi$ over $P$ is attained
 	at a vertex~$\bar{\mathbf{u}}$ of $P$. The vertex $\bar{\mathbf{u}}$ is determined by $n$ linearly independent active
 	constraints, potentially including the bound constraints
 	$0\le u_i \le1$, $i \in [n]$, whose coefficient matrix
 	$\bar{\mathbf{A}}\in\{-1,0,1\}^{n\times n}$ is nonsingular. Each row of $\bar{\mathbf{A}}$ has entries in $\{-1,0,1\}$ and
 	Euclidean norm at most~$\sqrt q$. 
 	
 	Then, by Hadamard's inequality, we have
 	$
 	|\det(\bar{\mathbf{A}})|
 	\le
 	q^{n/2}.
 	$
 	Furthermore, by Cramer's rule, each coordinate of $\bar{\mathbf{u}}$ is rational with the denominator at most $q^{n/2}$. Since~$P\cap\{0,1\}^n=\emptyset$, the vertex $\bar{\mathbf{u}}$ is not binary. Thus, there exists an index $i \in [n]$ such that $0<\bar u_i<1$. For this~index,
 	\[
 	\min\{\bar u_i,1-\bar u_i\}
 	\ge
 	q^{-n/2}.
 	\]
 	Consequently,
 	\[
 	\sum_{i=1}^n \min\{\bar u_i,1-\bar u_i\}
 	\ge
 	q^{-n/2},
 	\]
 	which implies the result. \end{proof}
\end{lemma}

\begin{theorem} \label{theorem 3}
	Let $\mathcal{C}$ be restricted to the class of min-max BLPs satisfying Assumption \textbf{A1}, for which both the single-level relaxation~\upshape[\textbf{SLR}] \itshape and the associated follower's problem in \upshape (\ref{eq: follower's optimal response}) \itshape admit unique optimal solutions.
	Then, unless $P = NP$, the tightest upper bound $\hat{z}_{\,U}^*$ is provably best in the sense of \upshape [\textbf{U-D}]. \itshape  
	\begin{proof}
		Similar to the proof of Theorem \ref{theorem 1}, we consider an instance of 3-SAT given by a Boolean formula~$\varphi = C_1 \wedge C_2 \wedge \cdots \wedge C_m$, where $P_j$ and $N_j$ denote the sets of positive and negative literals in clause $C_j$, $j \in [m]$, respectively. We introduce the following associated min-max problem:
	\begin{subequations} \label{3 SAT min-max upper bound}
		\begin{align}
			z^*_{\,BLP} = \min_{\mathbf{x}, s } &\; M_1 s + \max_{\mathbf{y}} \Big\{M_2 \sum_{i = 1}^n y_i: \; \mathbf{0} \leq \mathbf{y} \leq \mathbf{x}, \; \mathbf{y} \leq \mathbf{1} - \mathbf{x} \Big\} \\
			\text{s.t. } &
			\sum_{i\in P_j} x_i + \sum_{i\in N_j} (1-x_i) \geq \tfrac{3-s}{2} \quad \forall j \in [m] \label{cons: 3 SAT min-max upper bound 1} \\
			& \tfrac{1 - s}{2} \leq x_i \leq \tfrac{1 + s}{2} \quad \forall i \in [n] \label{cons: 3 SAT min-max upper bound 2} \\
			& s \in [0, 1], \label{cons: 3 SAT min-max upper bound 3}
		\end{align}
	\end{subequations}	
	where $M_2 := 3^{n} > 0$ and $M_1 := \tfrac{M_2 n}{2} - 1 > 0$. Since $M_2 > 0$ and $s \in [0, 1]$, the follower's optimal solution in (\ref{3 SAT min-max upper bound}) is unique and given by $y^*_i = \min\{x_i, 1 - x_i\}$, $i \in [n]$. Furthermore, Assumption \textbf{A1} holds by construction.
	
	We now analyze the single-level relaxation of (\ref{3 SAT min-max upper bound}) defined as
		\begin{subequations} \label{3 SAT min-max SLR}
		\begin{align}
			z^*_{\,SLR} = \min_{\mathbf{x}, s, \mathbf{y}} &\; M_1 s + M_2 \sum_{i = 1}^n y_i \\
			\text{s.t. } & \text{(\ref{cons: 3 SAT min-max upper bound 1})--(\ref{cons: 3 SAT min-max upper bound 3})} 
			\\
			& \mathbf{0} \leq \mathbf{y} \leq \mathbf{x} \\ 
			& \mathbf{y} \leq \mathbf{1} - \mathbf{x}. 
		\end{align}
	\end{subequations}	
	It is rather easy to verify that $\hat{\mathbf{y}}^* = \mathbf{0}$, $\hat{s}^* = 0$ and $\hat{\mathbf{x}}^* = \tfrac{1}{2} \mathbf{1}$ is the unique optimal solution of (\ref{3 SAT min-max SLR}), with the associated optimal objective function value $z^*_{\,SLR} = 0$. Substituting $\hat{s}^* = 0$ and $\hat{\mathbf{x}}^* = \tfrac{1}{2} \mathbf{1}$ into the follower's problem in (\ref{3 SAT min-max upper bound}) yields the tightest upper bound
	\[\hat{z}^*_{\,U} = M_1 \hat{s}^* + M_2 \sum_{i = 1}^n \min\{\hat{x}^*_i, 1 - \hat{x}^*_i\} = \tfrac{M_2 n}{2}. \]
	
	First, assume that $\varphi$ admits a satisfying assignment $\tilde{\mathbf{x}} \in \{0, 1\}^n$. Then, by setting $\mathbf{x} = \tilde{\mathbf{x}}$ and~$s = 1$, we observe that the leader constraints (\ref{cons: 3 SAT min-max upper bound 1})--(\ref{cons: 3 SAT min-max upper bound 3}) are satisfied. Hence,
	\begin{equation} \nonumber
	z^*_{\,BLP} \leq M_1 + M_2 \sum_{i = 1}^n \min\{\tilde{x}_i, 1 - \tilde{x}_i\} = M_1 = \tfrac{M_2 n}{2} - 1 < \tfrac{M_2 n}{2} = \hat{z}^*_{\,U},
	\end{equation}
    where we use the definition of $M_1$ and the fact that $\tilde{\mathbf{x}} \in \{0, 1\}^n$. Thus, if the answer to 3-SAT is ``yes'', then $z^*_{\,BLP} < \hat{z}^*_{\,U}$.
    
    Next, suppose that $\varphi$ is unsatisfiable. If $s = 0$, then the unique feasible solution of the leader in~(\ref{3 SAT min-max upper bound}) is given by $\mathbf{x} = \tfrac{1}{2} \mathbf{1}$, yielding the objective function value $\tfrac{M_2 n}{2}$. 
    On the other hand, with~$s \in (0, 1]$, any feasible $\mathbf{x}$ satisfies
    \[x_i = \tfrac{1 - s}{2} + u_i s,\]
    where $u_i \in [0, 1]$ and $i \in [n]$; recall (\ref{cons: 3 SAT min-max upper bound 2}). In particular, $1 - x_i = \tfrac{1 - s}{2} + (1 - u_i) s$ and, for each $j \in [m]$, the left-hand side of constraints (\ref{cons: 3 SAT min-max upper bound 1}) can be expressed as
    \begin{equation} \nonumber
    	\sum_{i\in P_j} x_i + \sum_{i\in N_j} (1-x_i) 
    	= \tfrac{3}{2}(1 - s) + \Big(\sum_{i\in P_j}u_i + \sum_{i\in N_j} (1-u_i) \Big) s.
    \end{equation}
   Thus, (\ref{cons: 3 SAT min-max upper bound 1}) implies that 
   $\sum_{i\in P_j}u_i + \sum_{i\in N_j} (1-u_i) \geq 1$. 
   Furthermore,
   \[\sum_{i = 1}^n \min \{x_i, 1 - x_i\} = \tfrac{1 - s}{2} n + s \sum_{i = 1}^n \min\{u_i, 1 - u_i\}.\] 
   
   As a result, by combining the cases $s = 0$ and $s \in (0, 1]$, we observe that (\ref{3 SAT min-max upper bound}) reduces to 
   \begin{subequations} \label{3 SAT min-max upper bound 2}
   	\begin{align}
        z^*_{\,BLP} = \min_{\mathbf{u}, s} &\; \left\{M_1 s + M_2 \Big(\tfrac{1 - s}{2} n + s\sum_{i = 1}^n \min\{u_i, 1 - u_i\}  \Big)\right\} \\
   		\text{s.t. } &
   		\sum_{i\in P_j} u_i + \sum_{i\in N_j} (1-u_i) \geq 1 \quad \forall j \in [m] \label{cons: 3 SAT min-max upper bound 2 1}  \\
   		& u_i \in [0, 1] \quad \forall i \in [n] \textbf{\label{cons: 3 SAT min-max upper bound 2 2}} \\
   		& s \in [0, 1]. \label{cons: 3 SAT min-max upper bound 2 3}
   	\end{align}
   \end{subequations}
   In particular, with $s = 0$, one may choose any feasible $\mathbf{u}$, for example, $\mathbf{u} = \tfrac{1}{2} \mathbf{1}$.
   Since \(\varphi\) is unsatisfiable, the feasible region of
   (\ref{3 SAT min-max upper bound 2}) contains no binary vector \(\mathbf{u}\in\{0,1\}^n\). Hence,
   \[
   \delta
   :=
   \min_{\mathbf{u}}
   \Big\{
   \sum_{i=1}^n \min\{u_i,1-u_i\}
   : \;
   \text{(\ref{cons: 3 SAT min-max upper bound 2 1})--(\ref{cons: 3 SAT min-max upper bound 2 2}) hold}
   \Big\}
   >0,
   \]
   and furthermore 
   \begin{equation} \nonumber
   	 z^*_{\,BLP} = \min_{s \in [0, 1]} \Big\{(M_1 - \tfrac{M_2 n}{2} + M_2\delta) s  + \tfrac{M_2 n}{2} \Big\} = \min_{s \in [0, 1]} \Big\{(M_2 \delta - 1) s + \tfrac{M_2 n}{2} \Big\}.
   \end{equation}
   
   Notably, constraints (\ref{cons: 3 SAT min-max upper bound 2 1})--(\ref{cons: 3 SAT min-max upper bound 2 2}) define a nonempty compact polytope of the form \eqref{eq: polytope}, whose constraint matrix has at most $q=3$ nonzero entries in each row.
   Consequently, by Lemma \ref{lemma 2}, $\delta \geq 3^{-n/2}$, and thus $M_2 \delta \geq 3^{n/2} > 1$. We conclude that $z^*_{\,BLP} = \tfrac{M_2 n}{2} = \hat{z}^*_{\,U}$ if and only if $\varphi$ is unsatisfiable. This contradicts the existence of a polynomial-time computable upper bound $\beta_{\,U}$ that strictly dominates~$\hat{z}^*_{\,U}$, and the result follows.
    \end{proof}
\end{theorem}  

\subsection{Approximation guarantees for SLR-based bounds} \label{subsec: approximation}
Importantly, Theorems~\ref{theorem 1} and~\ref{theorem 3}
apply to a general min-max formulation of~[\textbf{BLP}]. In contrast, whenever~[\textbf{BLP}] itself is polynomially solvable, $z^*_{\,BLP}$ provides the required polynomial-time improvement, and therefore the answers to~[\textbf{L-D}] and [\textbf{U-D}] become positive. This is the case, for example, when either the number of follower variables or constraints in~(\ref{cons: follower continuous}) is fixed; see, e.g., \cite{Deng1998,Ketkov2026}. Moreover, the single-level relaxation [\textbf{SLR}] is known to be exact when~[\textbf{BLP}] is a \textit{min-min problem}, i.e., $\mathbf{g} = \mathbf{d}$. In this case, $z^*_{\,SLR} = z^*_{\,BLP} = \hat{z}^*_{\, U}$ and [\textbf{BLP}] reduces to a linear program; recall the definition of~[\textbf{SLR}] and the tightest upper~bound~(\ref{eq: tightest upper bound}). 

This observation raises the question of whether the SLR-based bounds $z^*_{\,SLR}$ and $\hat{z}^*_{\, U}$ remain tight when the leader's and the follower's objective functions are  well aligned, i.e., $\| \mathbf{d} - \mathbf{g}\| \leq \varepsilon$ for some small $\varepsilon > 0$. To this end, we establish the following general approximation~result. 

\begin{theorem}\label{theorem: approximation}
	Suppose that Assumption~\textbf{A1} holds and
	$\|\mathbf{d}-\mathbf{g}\| \leq \varepsilon$. Define
	\begin{equation} \label{eq: diameter}
		D := \max_{\mathbf{x}\in X^c} \,
		\max_{\mathbf{y},\mathbf{y}'\in Y^c(\mathbf{x})}
		\|\mathbf{y}-\mathbf{y}'\|_*.
	\end{equation}
	Then, $D<\infty$ and
	$0\leq \hat{z}^*_{\,U}-z^*_{\,SLR}\leq \varepsilon D$. 
\end{theorem}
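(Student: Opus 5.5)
The proof splits into two parts: first showing that $D$ is finite, then bounding the gap. For finiteness, we note that by Assumption~\textbf{A1} the set $X^c$ is a nonempty polytope and $Y^c(\mathbf{x})$ is nonempty and bounded for each $\mathbf{x}\in X^c$. Consider the lifted polyhedron $\Omega:=\{(\mathbf{x},\mathbf{y}):\mathbf{x}\in X^c,\ \mathbf{y}\in Y^c(\mathbf{x})\}$. Its recession cone consists of directions $(\mathbf{0},\mathbf{r})$ with $\mathbf{r}\ge\mathbf{0}$ and $\mathbf{F}\mathbf{r}\le\mathbf{0}$; here the $\mathbf{x}$-component vanishes because $X^c$ is bounded. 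Such an $\mathbf{r}\neq\mathbf{0}$ would be a recession direction of every $Y^c(\mathbf{x})$, which contradicts boundedness. Hence $\Omega$ is a polytope. Since $(\mathbf{x},\mathbf{y},\mathbf{y}')\mapsto\|\mathbf{y}-\mathbf{y}'\|_*$ is continuous on the compact set of triples with $(\mathbf{x},\mathbf{y}),(\mathbf{x},\mathbf{y}')\in\Omega$, the maximum is attained and finite. The bound $\hat z^*_U-z^*_{SLR}\ge 0$ follows immediately: any $(\hat{\mathbf{x}},\hat{\mathbf{y}})$ feasible in (\ref{eq: tightest upper bound}) is feasible for [\textbf{SLR}], so $z^*_{SLR}\le \hat z^*_U$.

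For the upper bound, fix an SLR-optimal pair $(\hat{\mathbf{x}}^*,\bar{\mathbf{y}})$, so that $z^*_{SLR}=\mathbf{a}^\top\hat{\mathbf{x}}^*+\mathbf{d}^\top\bar{\mathbf{y}}$ and $\hat{\mathbf{x}}^*\in X^*_{SLR}$. Let $\hat{\mathbf{y}}$ be any follower-optimal response to $\hat{\mathbf{x}}^*$; it exists because $Y^c(\hat{\mathbf{x}}^*)$ is a nonempty polytope. By (\ref{eq: tightest upper bound}), $\hat z^*_U\le \mathbf{a}^\top\hat{\mathbf{x}}^*+\mathbf{d}^\top\hat{\mathbf{y}}$. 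It therefore suffices to show $\mathbf{d}^\top(\hat{\mathbf{y}}-\bar{\mathbf{y}})\le\varepsilon D$. We decompose
\[
\mathbf{d}^\top(\hat{\mathbf{y}}-\bar{\mathbf{y}})=\mathbf{g}^\top(\hat{\mathbf{y}}-\bar{\mathbf{y}})+(\mathbf{d}-\mathbf{g})^\top(\hat{\mathbf{y}}-\bar{\mathbf{y}}).
\]
The first term is nonpositive, since $\hat{\mathbf{y}}$ minimizes $\mathbf{g}^\top\mathbf{y}$ over $Y^c(\hat{\mathbf{x}}^*)$ and $\bar{\mathbf{y}}\in Y^c(\hat{\mathbf{x}}^*)$. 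The second term is bounded by Hölder's inequality for the pair $\|\cdot\|,\|\cdot\|_*$, giving at most $\|\mathbf{d}-\mathbf{g}\|\,\|\hat{\mathbf{y}}-\bar{\mathbf{y}}\|_*\le\varepsilon D$, because both points lie in $Y^c(\hat{\mathbf{x}}^*)$ with $\hat{\mathbf{x}}^*\in X^c$.

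The main subtlety is the finiteness of $D$, that is, uniform boundedness of $Y^c(\mathbf{x})$ over $\mathbf{x}\in X^c$. Pointwise boundedness alone does not immediately give a uniform bound, which is why I would argue through the recession cone of the lifted polyhedron $\Omega$ as above. Everything else is a direct application of the optimality of $\hat{\mathbf{y}}$ and the dual-norm inequality.
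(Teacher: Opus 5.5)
Your proof is correct and follows essentially the same route as the paper: fix a leader-optimal SLR solution, compare the $\mathbf{d}$-optimal and $\mathbf{g}$-optimal points of $Y^c(\hat{\mathbf{x}}^*)$ using $\mathbf{g}^\top(\hat{\mathbf{y}}-\bar{\mathbf{y}})\le 0$ and H\"older's inequality, then pass to $\hat{z}^*_{\,U}$, which is no larger than any single SLR-based upper bound. The one place you go further is $D<\infty$: the paper only asserts it from boundedness of $X^c$ and of each $Y^c(\mathbf{x})$, whereas your recession-cone argument on the lifted polyhedron $\Omega$ proves the needed uniform bound (the $\mathbf{x}$-component of a recession direction must vanish, and $\{\mathbf{r}\ge\mathbf{0}:\mathbf{F}\mathbf{r}\le\mathbf{0}\}$ does not depend on $\mathbf{x}$).
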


\begin{proof}
	Under Assumption~\textbf{A1}, $X^c$ and $Y^c(\mathbf{x})$ for every
	$\mathbf{x}\in X^c$ are bounded polyhedrons, and hence the constant~$D$ defined by equation (\ref{eq: diameter}) is finite. 
	Let $\hat{\mathbf{x}}^*$ be any leader-optimal solution of [\textbf{SLR}] with $n_2 = m_2 = 0$ and let
	\[
	\tilde{\mathbf{y}}^*\in
	\argmin_{\mathbf{y}\in Y^c(\hat{\mathbf{x}}^*)}
	\mathbf{d}^{\top}\mathbf{y} \, \; \text{ and } \, \; \hat{\mathbf{y}}^* \in
	\argmin_{\mathbf{y}\in Y^c(\hat{\mathbf{x}}^*)}
	\mathbf{g}^{\top}\mathbf{y}.
	\]
	By definition, we have
    $z^*_{\,SLR}
	=
	\mathbf{a}^{\top}\hat{\mathbf{x}}^*
	+\mathbf{d}^{\top}\tilde{\mathbf{y}}^*$ and $\hat{z}_{\,U} = \mathbf{a}^{\top}\hat{\mathbf{x}}^*
	+\mathbf{d}^{\top}\hat{\mathbf{y}}^*$, where $\hat{z}_{\,U}$ is the upper bound defined by equation (\ref{eq: upper bound}). 
	
	Therefore,
	\begin{equation} \label{eq: approximation}
		\hat z_{\,U}-z^*_{\,SLR}
		=
		\mathbf{d}^{\top}(\hat{\mathbf{y}}^*-\tilde{\mathbf{y}}^*) \leq
		(\mathbf{d}-\mathbf{g})^{\top}
		(\hat{\mathbf{y}}^*-\tilde{\mathbf{y}}^*) \leq
		\|\mathbf{d}-\mathbf{g}\| \,
		\|\hat{\mathbf{y}}^*-\tilde{\mathbf{y}}^*\|_* \leq
		\varepsilon D.
	\end{equation}
	Here, the first inequality follows from the
	optimality of $\hat{\mathbf{y}}^*$, i.e.,
	$\mathbf{g}^{\top}\hat{\mathbf{y}}^*
	\leq \mathbf{g}^{\top}\tilde{\mathbf{y}}^*$, while the second follows from the
	Hölder's inequality. Since~(\ref{eq: approximation}) holds for
	every associated SLR-based upper bound $\hat z_{\,U}$, it also holds for
	the tightest upper bound $\hat{z}^*_{\,U}$ defined by equation (\ref{eq: tightest upper bound}). Finally,
	$z^*_{\,SLR}\leq z^*_{\,BLP}\leq \hat{z}^*_{\,U}$,
	which implies the result.
\end{proof}

The constant $D$ defined by equation (\ref{eq: diameter}) depends on the chosen norm and may, in general, be difficult to compute.
In particular, for the Euclidean norm, computing $D$ involves norm maximization over a polyhedron, a problem that is $NP$-hard in general \cite{Bodlaender1990}. In contrast, when the primal norm $\Vert\cdot\Vert$ is the $\ell_1$-norm, so that the dual norm $\Vert\cdot\Vert_*$ is the $\ell_\infty$-norm, we~have
\[
D
=
\max_{\mathbf{x}\in X^c}
\max_{\mathbf{y},\mathbf{y}'\in Y^c(\mathbf{x})}
\Vert \mathbf{y}-\mathbf{y}' \Vert_\infty
=
\max_{i\in[m_1]}
\max_{\substack{\mathbf{x}\in X^c\\
		\mathbf{y},\mathbf{y}'\in Y^c(\mathbf{x})}}
|y_i-y_i'|.
\]
Hence, $D$ can be computed in polynomial time by solving $2m_1$ linear programs.
\color{black}

Overall, Theorem~\ref{theorem: approximation} shows that the gap between the SLR-based lower and upper bounds decreases at least linearly with the distance between the leader's and follower's objective vectors. In particular, as $\varepsilon \to 0$, both bounds converge to $z^*_{\,BLP}$, recovering their exactness in the min-min case with $\mathbf{g}=\mathbf{d}$. Moreover, 
the proof of Theorem~\ref{theorem: approximation} does not rely on the continuity of the leader or follower variables and therefore, under standard regularity assumptions, applies to the pure integer case ($n_1 = m_1 = 0$).
\section{Integer Bilevel Linear Programs} \label{sec: miblp}
In this section we consider pure integer bilevel linear programs (IBLPs), where the leader and the follower solve integer linear programs; see, e.g., \cite{Caprara2014, Caprara2016,DeNegre2011} and the survey in \cite{Kleinert2021}. In particular, IBLPs correspond to [\textbf{BP}] with $n_1 = m_1 = 0$, i.e.,
\begin{subequations}\label{IBLP}
	\begin{align}
		[\textbf{IBLP}]:\quad
		z^*_{\,IBLP}:=\min_{\mathbf{x},\mathbf{y}^*}\;&
		\mathbf{a}^\top\mathbf{x}+\mathbf{d}^\top\mathbf{y}^*\\
		\text{s.t. }
		&\mathbf{x}\in X^d,\label{cons: leader integer}\\
		&\mathbf{y}^*\in
		\argmin_{\,\mathbf{y}\in Y^d(\mathbf{x})} \,
		\mathbf{g}^\top\mathbf{y}.
		\label{cons: follower integer}
	\end{align}
\end{subequations}
where $X^d:=\{\mathbf{x}\in\mathbb{Z}_+^{n_2}:\mathbf{H}\mathbf{x}\le\mathbf{h}\}$
and $Y^d(\mathbf{x})
:=
\{\mathbf{y}\in\mathbb{Z}_+^{m_2}:
\mathbf{L}\mathbf{x}+\mathbf{F}\mathbf{y}\le\mathbf{f}\}$. 
Similar to the pure continuous case, we make the following standard assumption:
\begin{itemize}
	\item[\textbf{A1$'$.}] The leader's feasible set $X^d$ is nonempty and bounded, and the follower's feasible set~$Y^d(\mathbf{x})$ is nonempty and bounded for all $\mathbf{x} \in X^d$.
\end{itemize}	

It is well known that [\textbf{IBLP}] is $\Sigma^P_2$-hard; see, e.g., \cite{Caprara2013, Jeroslow1985}. In other words, this problem is located at the second level of the polynomial hierarchy and, unless $NP = \Sigma^P_2$, there is no way of formulating it as a single-level mixed-integer linear programming~(MILP) problem of polynomial size. In contrast, when $n_1 = m_1 = 0$, [\textbf{SLR}] and the follower's problem in~(\ref{eq: follower's optimal response}) reduce to integer linear programs. We therefore conclude that computing both the lower bound~$z^*_{\,SLR}$ and the upper bound $\hat{z}_{\,U}$ defined in (\ref{eq: upper bound}) is $NP$-hard~\cite{Garey1979}. This again motivates the question of whether the SLR-based bounds can be uniformly improved within the same computational complexity regime. 

In addition to the standard single-level relaxation [\textbf{SLR}], one may also analyze its linear programming relaxation given by:
 \begin{subequations} \label{SLR 2} 
	\begin{align} [\textbf{SLR}']:\quad \tilde{z}^{*}_{\,SLR} := \min_{\mathbf{x},\mathbf{y}}\;& \mathbf{a}^\top \mathbf{x} + \mathbf{d}^\top \mathbf{y}  \\ \text{s.t. } & \mathbf{H} \mathbf{x}  \leq \mathbf{h} \\
	& \mathbf{L} \mathbf{x} + \mathbf{F} \mathbf{y} \leq \mathbf{f} \\
	& \mathbf{x}, \mathbf{y} \geq \mathbf{0};
\end{align} \end{subequations}
see, e.g., \cite{Kleinert2021}. In particular, $\tilde{z}^{*}_{\,SLR}$ can be computed in polynomial time and provides a valid, albeit generally weaker, lower bound for $z^*_{\,IBLP}$, i.e., we have \[\tilde{z}^{*}_{\,SLR} \leq z^{*}_{\,SLR}  \leq z^*_{\,IBLP}.\] Meanwhile, an optimal solution of [\textbf{SLR}$'$] is not necessarily integer, and hence this solution cannot be used to obtain a valid upper bound for $z^*_{\,IBLP}$; recall (\ref{eq: upper bound}). 

\subsection{Polynomial-time Computable Lower Bounds} \label{subsec: poly lower bounds}
First, we demonstrate that the result similar to Theorem \ref{theorem 1} can be readily obtained for the continuous relaxation [\textbf{SLR}$'$]. Similar to [\textbf{L-D}], we introduce the following decision problem:
\begin{itemize}
	\item[$ $] [\textbf{L-D}$'$]: Given a class $\mathcal{C}$ of IBLPs satisfying Assumption \textbf{A1$'$}, does there exist a polynomial-time computable bound~$\beta_{\,L}$ such~that
	\begin{equation} \nonumber
		\tilde{z}^*_{\,SLR}(I) \leq \beta_{\,L}(I) \leq z^*_{\,IBLP}(I) \quad \forall I \in \mathcal{C},
	\end{equation}
	with strict inequality $\beta_{\,L}(I) > \tilde{z}^*_{\,SLR}(I)$ whenever $\tilde{z}^*_{\,SLR}(I) < z_{\,IBLP}^*(I)$?
\end{itemize}
The following result holds.
\begin{theorem} \label{theorem 4}
	Unless $P = NP$, the lower bound $\tilde{z}^*_{\,SLR}$ is provably best in the sense of \upshape [\textbf{L-D}$'$], \itshape even when $\mathcal{C}$ is restricted to the class of min-max IBLPs satisfying Assumption \textbf{A1$\,'$}.
	\begin{proof}
		Similar to the proof of Theorem \ref{theorem 1}, we consider an instance of 3-SAT given by a Boolean formula~$\varphi = C_1 \wedge C_2 \wedge \cdots \wedge C_m$, where $P_j$ and $N_j$ denote the sets of positive and negative literals in clause $C_j$, $j \in [m]$, respectively. We introduce the following associated min-max problem:
		\begin{subequations} \label{3 SAT min-max integer}
			\begin{align}
				z^*_{\,IBLP} = \min_{\mathbf{x}, t } & \;\left\{\ t + \max_{\mathbf{y} \in \{0, 1\}^n} \Big\{\sum_{i = 1}^n y_i: \; \mathbf{0} \leq \mathbf{y} \leq \mathbf{x}, \; \mathbf{y} \leq \mathbf{1} - \mathbf{x} \Big\}\right\} \\
				\text{s.t. } &
				\sum_{i\in P_j} x_i + \sum_{i\in N_j} (1-x_i) \geq 1 - t \quad \forall j \in [m] \\
				& \mathbf{\mathbf{x}} \in \{0,1\}^n, \; t \in \{0, 1\}. 
			\end{align}
		\end{subequations}
		
		If $\varphi$ admits a satisfying assignment $\tilde{\mathbf{x}} \in \{0, 1\}^n$, then setting $\mathbf{x} = \tilde{\mathbf{x}}$ and $t = 0$ yields $z^*_{\,IBLP} = 0$.   
		Otherwise, any optimal solution of (\ref{3 SAT min-max integer}) satisfies $t^* = 1$ and $\mathbf{y}^* = \mathbf{0}$, which yields $z^*_{\,IBLP} = 1$. Finally, the continuous relaxation of (\ref{3 SAT min-max integer}) has optimal value $\tilde{z}^*_{\,SLR} = 0$, for example, by setting $\mathbf{x} = \tfrac{1}{2} \mathbf{1}$ and~$t = 0$. Hence, unless $P = NP$, the equality $\tilde{z}^*_{\,SLR} = z^*_{\,IBLP}$ cannot be verified in polynomial time, and the result follows.
	\end{proof}
\end{theorem}
\subsection{MILP-Oracle Computable Lower Bounds} \label{subsec: np lower bounds}
In this section, we consider the integer single-level relaxation [\textbf{SLR}] corresponding to $n_1 = m_1=~0$. Since [\textbf{SLR}] is itself an integer linear program, we investigate whether~$z^*_{\,SLR}$ can be systematically improved by using polynomial-time algorithms with access to an \textit{MILP~oracle}:

\begin{definition} \label{definition 1}
\upshape A bound $\beta$ is \textit{MILP-oracle computable} if it can be computed by a polynomial-time algorithm with access to an oracle that optimally solves polynomial-size~MILPs. \hfill $\square$
\end{definition}

Definition~\ref{definition 1} is motivated by the fact that many exact algorithms for [\textbf{IBLP}] strengthen the single-level relaxation [\textbf{SLR}] by iteratively adding valid inequalities and repeatedly solving polynomial-size MILPs; see, e.g., \cite{Caprara2016, DeNegre2011}. We therefore investigate whether such iterative MILP-based procedures can systematically improve $z^*_{\,SLR}$. To this end, we introduce the following analogue of [\textbf{L-D}]:
\begin{itemize}
	\item[$ $] [\textbf{IL-D}]: Given a class $\mathcal{C}$ of IBLPs satisfying Assumption~\textbf{A1$'$}, does there exist an MILP-oracle computable bound~$\beta_{\,L}$, such~that
	\begin{equation} \nonumber
		z^*_{\,SLR}(I) \leq \beta_{\,L}(I) \leq z^*_{\,IBLP}(I) \quad \forall I \in \mathcal{C},
	\end{equation}
	with strict inequality $\beta_{\,L}(I) > z^*_{\,SLR}(I)$ whenever $z^*_{\,SLR}(I) < z_{\,IBLP}^*(I)$?
\end{itemize}

Let $\Delta_2^P = P^{NP}$ denote the class of problems solvable
in polynomial time with access to an $NP$ oracle; see, e.g., \cite{Arora2009}. Then, the following result holds.

\begin{theorem} \label{theorem 5}
	Unless $\Delta_2^P = \Sigma^P_2$, the lower bound $z^*_{\,SLR}$ is provably best in the sense of \upshape [\textbf{IL-D}], \itshape even when $\mathcal{C}$ is restricted to the class of min-max~IBLPs satisfying Assumption~\textbf{A1$\,'$}.
\begin{proof} First, we observe that $\beta_{\,L} = z^*_{\,SLR}$ if and only if
$z^*_{\,SLR} = z^*_{\,IBLP}$. In the following, we show that deciding whether
$z^*_{\,SLR} = z^*_{\,IBLP}$ is $\Sigma_2^P$-hard, and therefore the answer to~[\textbf{IL-D}] is negative.
Indeed, if the answer to [\textbf{IL-D}] were positive, then one could
decide whether $z^*_{\,SLR} = z^*_{\,IBLP}$ by comparing $\beta_{\,L}$ and $z^*_{\,SLR}$ that are MILP oracle-computable, which contradicts the assumption that $\Delta^P_2 \neq \Sigma_2^P$.

To establish that deciding whether
$z^*_{\,SLR} = z^*_{\,IBLP}$ is $\Sigma_2^P$-hard, we use a reduction from QSAT$_2$. Given a quantified Boolean formula
\[
\exists \mathbf{x}\in\{0,1\}^{n_x}\;
\forall \mathbf{y}\in\{0,1\}^{n_y}\;
\varphi(\mathbf{x},\mathbf{y}),
\]
where $\varphi(\mathbf{x},\mathbf{y}) = T_1(\mathbf{x},\mathbf{y}) \vee \cdots \vee T_m(\mathbf{x},\mathbf{y})$ is in
\textit{3-disjunctive normal form} (3-DNF), the question is whether the quantified formula is true.
This problem is known to be $\Sigma_2^P$-complete~\cite{Stockmeyer1976}.

For each term $T_j$, $j \in [m]$, let $P_j^x$ and $N_j^x$ (respectively, $P_j^y$ and $N_j^y$) denote the sets of existential~(universal) variables that appear as positive and negative literals in term $T_j$, respectively. We~consider the following instance of  [\textbf{IBLP}] associated with QSAT$_2$:

\begin{subequations} \label{QSAT min-max integer}
	\begin{align}
		z^*_{\,IBLP} = \min_{\mathbf{x} \in \{0,1\}^{n_x}} \max_{\mathbf{y}, v} \quad & v \\
		\text{s.t. } 
		& v \leq 
		\sum_{i \in P^x_j}(1-x_i)
		+\sum_{i \in N^x_j}x_i
		+\sum_{i \in P^y_j}(1-y_i)
		+\sum_{i \in N^y_j}y_i
		\quad \forall j \in [m]
		\label{cons: QSAT min-max integer 2} \\
		& v \in \{0,1\}, \; \mathbf{y} \in \{0,1\}^{n_y}. 
		\label{cons: QSAT min-max integer 1}
	\end{align}
\end{subequations}

Suppose that QSAT$_2$ admits a ``yes'' instance. Then, there exists
$\tilde{\mathbf{x}} \in \{0,1\}^{n_x}$ such that
$\varphi(\tilde{\mathbf{x}},\mathbf{y})$ is satisfied for every
$\mathbf{y} \in \{0,1\}^{n_y}$. Hence, for every
$\mathbf{y} \in \{0,1\}^{n_y}$, at least one term $T_j$ is satisfied. For this term, the
right-hand side of the corresponding constraint
(\ref{cons: QSAT min-max integer 2}) is equal to zero. Thus, the follower's optimal solution in
(\ref{QSAT min-max integer}) satisfies $v^*=0$, and consequently
$z^*_{\,IBLP}=0$.

Conversely, suppose that QSAT$_2$ admits a ``no'' instance. Then, for every
$\mathbf{x}\in\{0,1\}^{n_x}$, there exists
$\tilde{\mathbf{y}}\in\{0,1\}^{n_y}$ such that
$\varphi(\mathbf{x},\tilde{\mathbf{y}})$ is not satisfied. Equivalently,
every term $T_j$, $j\in[m]$, contains at least one literal that is false.
By setting $\mathbf{y}=\tilde{\mathbf{y}}$, the right-hand side of every
constraint \eqref{cons: QSAT min-max integer 2} is therefore at least one,
and hence the follower can set $v=1$. Thus, the follower's optimal
objective function value in (\ref{QSAT min-max integer}) is equal to one.
Since this holds for every leader decision $\mathbf{x} \in\{0,1\}^{n_x}$, we obtain
$z^*_{\,IBLP}=1$.

Finally, it is rather straightforward to verify that $z^*_{SLR} = 0$ for the single-level relaxation of (\ref{QSAT min-max integer}). Thus, $z^*_{IBLP} = z^*_{SLR}$ if and only if QSAT$_2$ admits a ``yes'' instance, and the result~follows.  
\end{proof}
\end{theorem}	

\subsection{MILP-Oracle Computable Upper Bounds} \label{subsec: np upper bounds}
In this section, we analyze the upper bound (\ref{eq: upper bound}) and the tightest upper bound (\ref{eq: tightest upper bound}) for $z^*_{IBLP}$. Similar to [\textbf{IL-D}], we introduce the following decision problem (recall Definition \ref{definition 1}):
\begin{itemize}
	\item[$ $] [\textbf{IU-D}]: Given a class $\mathcal{C}$ of IBLPs satisfying Assumption \textbf{A1$'$}, does there exist an MILP-oracle computable bound~$\beta_{\,U}$ such~that
	\begin{equation} \nonumber
		z^*_{\,IBLP}(I) \leq \beta_{\,U}(I) \leq \hat{z}^*_{\,U}(I) \quad \forall I \in \mathcal{C},
	\end{equation}
	with strict inequality $\beta_{\,U}(I) < \hat{z}^*_{\,U}(I)$ whenever $z_{\,IBLP}^*(I) < \hat{z}^*_{\,U}(I)$?
\end{itemize}

Following the discussion in Section \ref{subsec: upper bounds}, we observe that  $\hat{z}^*_{\,U}(I)$ can be computed by solving two polynomial-size MILPs given that both [\textbf{SLR}] with $n_1 = m_1 = 0$ and the associated follower's problem in (\ref{eq: follower's optimal response}) admit unique optimal solutions. The following result holds. 

\begin{theorem} \label{theorem 6}
	Let $\mathcal{C}$ be restricted to the class of min-max IBLPs satisfying Assumption~\textbf{A1$\,'$}, for which both the single-level relaxation~\upshape[\textbf{SLR}] \itshape and the associated follower's problem in \upshape (\ref{eq: follower's optimal response}) \itshape admit unique optimal solutions.
	Then, unless $\Delta_2^P = \Sigma^P_2$, the upper bound $\hat{z}^*_{\,U}$ is provably best in the sense of \upshape [\textbf{IU-D}]. \itshape 
	\begin{proof}
    Assume that the answer to [\textbf{IU-D}] is positive. Then there
    exists an MILP-oracle computable bound $\beta_{\,U}$ such that $z^*_{\,IBLP}\le \beta_{\,U}\le \hat z^*_{\,U}$ with strict inequality whenever $z^*_{\,IBLP}<\hat z^*_{\,U}$.
    We show that this assumption implies $\Sigma_2^P=\Delta_2^P$. To this end we use a reduction from QSAT$_2$ given by:
	\[
	\exists \mathbf{x}\in\{0,1\}^{n_x}\;
	\forall \mathbf{y}\in\{0,1\}^{n_y}\;
	\varphi(\mathbf{x},\mathbf{y}),
	\]
	where $\varphi$ is in 3-disjunctive normal form (3-DNF) with $m$ terms. 
	
	We introduce the following instance of~[\textbf{IBLP}] associated with QSAT$_2$:
	\begin{align}	 \label{QSAT min-max integer 2}
		z^*_{\,IBLP} = \min_{(\mathbf{x}, s) \in \tilde{X}} \; 
		\Biggl\{ s + \max_{\mathbf{y}, v, w}  \Big\{ & 2w + v:  \text{(\ref{cons: QSAT min-max integer 2})--(\ref{cons: QSAT min-max integer 1})}, \nonumber  \\
		& w \in \{0, 1\}, \quad w \leq 1 - s,\\
		& v \leq s, \quad y_i \leq s \quad \forall i \in [n_y] \nonumber \Big\}\Biggr\},  
	\end{align}
where 
\[\tilde{X} := \Big\{(\mathbf{x}, s) \in \{0, 1\}^{n_x + 1}: \;  x_i \leq s \quad \forall i \in [n_x] \Big\}. \] 
Notably, the single-level relaxation [\textbf{SLR}] of (\ref{QSAT min-max integer 2}) admits the unique optimal solution obtained by setting all variables $\mathbf{x}, \mathbf{y}, s, v, w$ equal to zero, with the optimal objective function value $z^*_{SLR} = 0$. The corresponding follower's optimal solution in (\ref{QSAT min-max integer 2}) is unique and is given by $\mathbf{y}^* = \mathbf{0}$, $v^* = 0$ and~$w^* = 1$. This yields the tightest upper bound $\hat{z}^*_{\,U} = 2$; recall (\ref{eq: tightest upper bound}).

First, assume that QSAT$_2$ admits a ``yes'' instance. Then, there exists
$\tilde{\mathbf{x}} \in \{0,1\}^{n_x}$ such that~$\varphi(\tilde{\mathbf{x}}, \mathbf{y})$ is satisfied for every
$\mathbf{y} \in \{0,1\}^{n_y}$. By setting $\mathbf{x} = \tilde{\mathbf{x}}$ and $s = 1$, the respective follower's optimal solution in (\ref{QSAT min-max integer 2}) yields $w^* = 0$ and $v^* = 0$; recall the proof of Theorem \ref{theorem 5}. Hence,~$z^*_{\,IBLP} \leq 1 < \hat{z}^*_{\,U}$.

Assume that QSAT$_2$ admits a ``no'' instance. Then, for every
$\mathbf{x}\in\{0,1\}^{n_x}$, there exists $\tilde{\mathbf{y}} \in \{0,1\}^{n_y}$ such that $\varphi(\mathbf{x}, \tilde{\mathbf{y}})$ is not satisfied. If $s = 0$, then the unique feasible solution of the leader in (\ref{QSAT min-max integer 2}) is given by $\mathbf{x} = \mathbf{0}$, yielding the objective function value $2$. Otherwise, if $s = 1$, then (\ref{QSAT min-max integer 2}) reduces to
	\begin{align} \label{QSAT min-max integer 3}	
	    \tilde{z}^*_{IBLP} := \min_{\mathbf{x} \in \{0, 1\}^{n_x}} \; 
		\left\{1 + \max_{\mathbf{y}, v}  \Big\{v: \text{(\ref{cons: QSAT min-max integer 2})--(\ref{cons: QSAT min-max integer 1})} \Big\}\right\}. \end{align}
Based on the proof of Theorem \ref{theorem 5}, the optimal objective function value of (\ref{QSAT min-max integer 3}) satisfies $\tilde{z}^*_{IBLP} \geq 2$. By combining the cases $s = 0$ and $s = 1$ we conclude that $z^*_{IBLP} = 2 = \hat{z}^*_{\,U}$. 

As a result, $z^*_{IBLP} < \hat{z}^*_{\,U}$ and consequently $\beta_{\,U} < \hat{z}^*_{\,U}$ if and only if QSAT$_2$ admits a ``yes'' instance.  Since both $\beta_{\,U}$ and $\hat{z}^*_{\,U}$ are MILP-oracle computable, QSAT$_2$ can be decided in $\Delta_2^P$, which contradicts the assumption that $\Delta_2^P\neq\Sigma_2^P$. This observation concludes the proof. 
\end{proof}
\end{theorem}

Taken together, Theorems~\ref{theorem 5} and \ref{theorem 6} demonstrate that the standard SLR-based lower and upper bounds for min-max IBLPs cannot, in general, be uniformly improved within the computational framework of polynomial-time MILP-oracle algorithms. Put differently, unless the polynomial hierarchy collapses, no generic MILP-based decomposition or cutting-plane framework can systematically strengthen these bounds within a polynomial number of~iterations.

\section{Conclusion} \label{sec: conclusions} 

In this paper, we study standard lower and upper bounds for mixed-integer bilevel linear programs obtained by relaxing the follower's optimality condition. Informally, we investigate whether these bounds can be uniformly improved over a class of bilevel problems without a substantial increase in the computational effort required to obtain them. For both continuous and pure integer bilevel linear programs, our complexity-theoretic results show that such uniform improvements are generally impossible within the respective computational regimes, even for the restrictive class of min-max problems. At the same time, we establish that the gap between the standard lower and upper bounds decreases at least linearly as the leader's and the follower's objective vectors become aligned.

Overall, our results provide a complexity-theoretic justification for the use of the standard lower and upper bounds in exact algorithms for bilevel optimization. Although these bounds can be weak for individual instances, uniformly stronger bounds cannot, in general, be obtained at comparable computational cost. This, however, does not preclude stronger bounds for particular instances or more restrictive problem classes, and identifying structural conditions that permit such improvements constitutes a natural direction for future research.

 \singlespacing
 
 \bibliographystyle{apa}
 \bibliography{bibliography}

\begin{thebibliography}{}

\bibitem[\protect\astroncite{Arora and Barak}{2009}]{Arora2009}
Arora, S. and Barak, B. (2009).
\newblock {\em Computational Complexity: A Modern Approach}.
\newblock Cambridge University Press.

\bibitem[\protect\astroncite{Audet et~al.}{1997}]{Audet1997}
Audet, C., Hansen, P., Jaumard, B., and Savard, G. (1997).
\newblock Links between linear bilevel and mixed 0--1 programming problems.
\newblock {\em Journal of Optimization Theory and Applications},
  93(2):273--300.

\bibitem[\protect\astroncite{Baringo and Conejo}{2012}]{Baringo2012}
Baringo, L. and Conejo, A.~J. (2012).
\newblock Transmission and wind power investment.
\newblock {\em IEEE Transactions on Power Systems}, 27(2):885--893.

\bibitem[\protect\astroncite{Ben-Ayed et~al.}{1992}]{Ben1992}
Ben-Ayed, O., Blair, C.~E., Boyce, D.~E., and LeBlanc, L.~J. (1992).
\newblock Construction of a real-world bilevel linear programming model of the
  highway network design problem.
\newblock {\em Annals of Operations Research}, 34(1):219--254.

\bibitem[\protect\astroncite{Bodlaender et~al.}{1990}]{Bodlaender1990}
Bodlaender, H.~L., Gritzmann, P., Klee, V., and Van~Leeuwen, J. (1990).
\newblock Computational complexity of norm-maximization.
\newblock {\em Combinatorica}, 10(2):203--225.

\bibitem[\protect\astroncite{Borrero et~al.}{2019}]{Borrero2019}
Borrero, J.~S., Prokopyev, O.~A., and Saur{\'e}, D. (2019).
\newblock Sequential interdiction with incomplete information and learning.
\newblock {\em Operations Research}, 67(1):72--89.

\bibitem[\protect\astroncite{Buchheim}{2023}]{Buchheim2023}
Buchheim, C. (2023).
\newblock Bilevel linear optimization belongs to {NP} and admits
  polynomial-size {KKT}-based reformulations.
\newblock {\em Operations Research Letters}, 51(6):618--622.

\bibitem[\protect\astroncite{Busygin and Pasechnik}{2006}]{Busygin2006}
Busygin, S. and Pasechnik, D.~V. (2006).
\newblock On {NP}-hardness of the clique partition--independence number gap
  recognition and related problems.
\newblock {\em Discrete Mathematics}, 306(5):460--463.

\bibitem[\protect\astroncite{Caprara et~al.}{2013}]{Caprara2013}
Caprara, A., Carvalho, M., Lodi, A., and Woeginger, G.~J. (2013).
\newblock A complexity and approximability study of the bilevel knapsack
  problem.
\newblock In {\em Integer Programming and Combinatorial Optimization: 16th
  International Conference, IPCO 2013, Valpara{\'\i}so, Chile, March 18-20,
  2013. Proceedings 16}, pages 98--109. Springer.

\bibitem[\protect\astroncite{Caprara et~al.}{2014}]{Caprara2014}
Caprara, A., Carvalho, M., Lodi, A., and Woeginger, G.~J. (2014).
\newblock A study on the computational complexity of the bilevel knapsack
  problem.
\newblock {\em SIAM Journal on Optimization}, 24(2):823--838.

\bibitem[\protect\astroncite{Caprara et~al.}{2016}]{Caprara2016}
Caprara, A., Carvalho, M., Lodi, A., and Woeginger, G.~J. (2016).
\newblock Bilevel knapsack with interdiction constraints.
\newblock {\em INFORMS Journal on Computing}, 28(2):319--333.

\bibitem[\protect\astroncite{Colson et~al.}{2007}]{Colson2007}
Colson, B., Marcotte, P., and Savard, G. (2007).
\newblock An overview of bilevel optimization.
\newblock {\em Annals of Operations Research}, 153(1):235--256.

\bibitem[\protect\astroncite{Dempe}{2002}]{Dempe2002}
Dempe, S. (2002).
\newblock {\em Foundations of Bilevel Programming}, volume~61 of {\em Nonconvex
  Optimization and Its Applications}.
\newblock Springer, Dordrecht.

\bibitem[\protect\astroncite{DeNegre}{2011}]{DeNegre2011}
DeNegre, S. (2011).
\newblock {\em Interdiction and Discrete Bilevel Linear Programming}.
\newblock Lehigh University.

\bibitem[\protect\astroncite{Deng}{1998}]{Deng1998}
Deng, X. (1998).
\newblock Complexity issues in bilevel linear programming.
\newblock In Pardalos, P.~M., Dempe, V.~F., and Migdalas, A.~A., editors, {\em
  Multilevel Optimization: Algorithms and Applications}, pages 149--164.
  Springer, Boston, MA.

\bibitem[\protect\astroncite{Fischetti et~al.}{2017}]{Fischetti2017}
Fischetti, M., Ljubi{\'c}, I., Monaci, M., and Sinnl, M. (2017).
\newblock A new general-purpose algorithm for mixed-integer bilevel linear
  programs.
\newblock {\em Operations Research}, 65(6):1615--1637.

\bibitem[\protect\astroncite{Fontaine and Minner}{2014}]{Fontaine2014}
Fontaine, P. and Minner, S. (2014).
\newblock Benders decomposition for discrete--continuous linear bilevel
  problems with application to traffic network design.
\newblock {\em Transportation Research Part B: Methodological}, 70:163--172.

\bibitem[\protect\astroncite{Garey and Johnson}{1979}]{Garey1979}
Garey, M.~R. and Johnson, D.~S. (1979).
\newblock {\em Computers and Intractability: A Guide to the Theory of
  NP-Completeness}.
\newblock W. H. Freeman and Company, New York.

\bibitem[\protect\astroncite{Hansen et~al.}{1992}]{Hansen1992}
Hansen, P., Jaumard, B., and Savard, G. (1992).
\newblock New branch-and-bound rules for linear bilevel programming.
\newblock {\em SIAM Journal on Scientific and Statistical Computing},
  13(5):1194--1217.

\bibitem[\protect\astroncite{Jeroslow}{1985}]{Jeroslow1985}
Jeroslow, R.~G. (1985).
\newblock The polynomial hierarchy and a simple model for competitive analysis.
\newblock {\em Mathematical Programming}, 32(2):146--164.

\bibitem[\protect\astroncite{Kahruman-Anderoglu et~al.}{2016}]{Kahruman2016}
Kahruman-Anderoglu, S., Buchanan, A., Butenko, S., and Prokopyev, O.~A. (2016).
\newblock On provably best construction heuristics for hard combinatorial
  optimization problems.
\newblock {\em Networks}, 67(3):238--245.

\bibitem[\protect\astroncite{Ketkov and Prokopyev}{2026}]{Ketkov2026}
Ketkov, S.~S. and Prokopyev, O.~A. (2026).
\newblock On the complexity of bilevel linear and quadratic programs in fixed
  dimensions.
\newblock {\em arXiv preprint arXiv:2511.15592}.

\bibitem[\protect\astroncite{Kleinert et~al.}{2021}]{Kleinert2021}
Kleinert, T., Labb{\'e}, M., Ljubi{\'c}, I., and Schmidt, M. (2021).
\newblock A survey on mixed-integer programming techniques in bilevel
  optimization.
\newblock {\em EURO Journal on Computational Optimization}, 9:100007.

\bibitem[\protect\astroncite{K{\"o}ppe et~al.}{2010}]{Koppe2010}
K{\"o}ppe, M., Queyranne, M., and Ryan, C.~T. (2010).
\newblock Parametric integer programming algorithm for bilevel mixed integer
  programs.
\newblock {\em Journal of Optimization Theory and Applications},
  146(1):137--150.

\bibitem[\protect\astroncite{Lov{\'a}sz}{1979}]{Lovasz1979}
Lov{\'a}sz, L. (1979).
\newblock On the shannon capacity of a graph.
\newblock {\em IEEE Transactions on Information Theory}, 25(1):1--7.

\bibitem[\protect\astroncite{Moore and Bard}{1990}]{Moore1990}
Moore, J.~T. and Bard, J.~F. (1990).
\newblock The mixed integer linear bilevel programming problem.
\newblock {\em Operations Research}, 38(5):911--921.

\bibitem[\protect\astroncite{Sinha et~al.}{2017}]{Sinha2017}
Sinha, A., Malo, P., and Deb, K. (2017).
\newblock A review on bilevel optimization: From classical to evolutionary
  approaches and applications.
\newblock {\em IEEE Transactions on Evolutionary Computation}, 22(2):276--295.

\bibitem[\protect\astroncite{Stockmeyer}{1976}]{Stockmeyer1976}
Stockmeyer, L.~J. (1976).
\newblock The polynomial-time hierarchy.
\newblock {\em Theoretical Computer Science}, 3(1):1--22.

\bibitem[\protect\astroncite{Wiesemann et~al.}{2013}]{Wiesemann2013}
Wiesemann, W., Tsoukalas, A., Kleniati, P.-M., and Rustem, B. (2013).
\newblock Pessimistic bilevel optimization.
\newblock {\em SIAM Journal on Optimization}, 23(1):353--380.

\bibitem[\protect\astroncite{Wogrin et~al.}{2020}]{Wogrin2020}
Wogrin, S., Pineda, S., and Tejada-Arango, D.~A. (2020).
\newblock Applications of bilevel optimization in energy and electricity
  markets.
\newblock In {\em Bilevel Optimization: Advances and Next Challenges}, pages
  139--168. Springer.

\bibitem[\protect\astroncite{Yue and You}{2017}]{Yue2017}
Yue, D. and You, F. (2017).
\newblock Stackelberg-game-based modeling and optimization for supply chain
  design and operations: A mixed integer bilevel programming framework.
\newblock {\em Computers \& Chemical Engineering}, 102:81--95.

\bibitem[\protect\astroncite{Zare et~al.}{2019}]{Zare2019}
Zare, M.~H., Borrero, J.~S., Zeng, B., and Prokopyev, O.~A. (2019).
\newblock A note on linearized reformulations for a class of bilevel linear
  integer problems.
\newblock {\em Annals of Operations Research}, 272(1):99--117.

\end{thebibliography}
\end{document}